\theoremstyle{remark}
\newcommand{\dket}[1]{|#1\rangle\!\rangle} 
\newcommand{\dbra}[1]{\langle\!\langle #1}
\algnewcommand{\algorithmicvariables}{\textbf{Global variables}}
\newtheorem{theorem}{Theorem}
\newtheorem{lemma}[theorem]{Lemma}
\newtheorem{remark}{Remark}
\newtheorem{example}{Example}
\algnewcommand{\algorithmicgoto}{\textbf{go to}}%
\algnewcommand{\Goto}[1]{\algorithmicgoto~\ref{#1}}%
\begin{document}
	
	\preprint{APS/123-QED}
	
	\title{
		Exact quantum noise deconvolution with partial knowledge of noise }
	\author{Nahid Ahmadvand}
	\author{Laleh Memarzadeh}
	\email{memarzadeh@sharif.edu}
	\affiliation{%
		Department of Physics, Sharif University of Technology, Tehran 11155-9161, Iran.
	}%
	\date{\today}
	
	\begin{abstract}
		 We introduce a new quantum noise deconvolution technique that requires neither complete knowledge of the noise nor partial noise tomography, and is applicable to any invertible quantum noise channel. In this new method, we construct a set of observables with completely correctable expectation values despite our incomplete knowledge of noise. This task is achieved just by classical post-processing without extra quantum resources. We show that the number of parameters in the subset of observables with correctable expectation values is the same for all unitary quantum channels. For mixed unitary channels and the assumption that the probability distribution of unitary errors is unknown, we instruct the construction of the set of observables with correctable expectation values. For a particular case where the mixed unitary channel is made of just two random unitary Kraus operators acting on $d$-dimensional Hilbert-space, we show that the observable with correctable expectation value belongs to a set with at least $d$ parameters. We extend our method by considering observables for which the partial recovery of the expectation value is possible, at the cost of having partial knowledge about the noise-free initial state.

	\end{abstract}
	
	\maketitle
	
	\section{Introduction}
	Quantum noise management is a fundamental challenge to the practical realization of quantum technology. Due to the nature of quantum systems, their complete isolation from environmental interaction is impossible. These unavoidable interactions cause errors on quantum systems, which limit the performance of quantum tasks, including fault-tolerant quantum computation \cite{Knill1998, Aharonov2008}, quantum communication \cite{Holevo1973, Holevo1998}, quantum network development \cite{Kimble2004, Razavi2018, Wehner2018} and quantum sensing \cite{Giovannetti2006, Degen2017}. Therefore, developing strategies to combat noise effects is essential for practical applications of quantum information science. 
	
	To remove noise effects, it is essential to describe the evolution of a quantum system, including evolution due to noise, by a mathematical map that is independent of the state it acts upon. Such a map is proved to be a completely-positive trace-preserving (CPTP) map \cite{breuer2010, rivas2012}.
	Therefore, to invert the noise effect, one may suggest finding the inverse of this map (if it exists). But the inverse may not be a CPTP map. That implies that even if it is possible to find the mathematical inverse of the noise map, implementing this inverse is not physically possible unless the inverse is a CPTP map. Therefore, more intelligent approaches are required to remove the noise effects.
	
	The conventional method for addressing noise effects has been quantum error correction (QEC)  protocols. Theoretically, QEC enables detection and correction of errors. However, it adds hardware overhead that makes its implementation challenging, if not impossible, with today's technology \cite{Knill2005,Cao22, Preskill2018}. Regarding this obstacle, alternative strategies are developed, like reversing noise \cite{karimipour2020,Shahbeigi2021}. This method requires the implementation of the reverse of the noise, which is not always a CPTP map. Hence, complete recovery is not possible unless the noise is a unitary conjugation. Another technique is quantum feedback control to retrieve information \cite{Gregoratti2003, Memarzadeh2011, Memarzadeh2011_Cafaro}.  But this technique requires access to the information dissipated to the environment. One of the most well-established techniques is quantum error mitigation (QEM) \cite{Temme2017, Suguru2018, Sam2019, Cai2023}. QEM reduces the impact of noise on the expectation value of observables without the need for extra hardware resources. There are a variety of successful QEM methods, including zero-noise extrapolation \cite{Li2017, Temme2017}, probabilistic error correction \cite{Temme2017}, measurement error mitigation \cite{Berg2022}, with primary application in quantum computation.

	Recently, there have been successful noise deconvolution methods \cite{mangini2022, roncallo2023} for removing generic noise effects on the expectation value of observables suitable for quantum communication and quantum tomography \cite{Siddhu2019, DAriano2000, DAriano2003, Cramer2010}. In \cite{mangini2022} it has been demonstrated that for a qubit system, noise-free expectation value can be recovered, without actual inversion of noise in the lab. Therefore, in this technique, we are not concerned that the inverted noise map is not physical. The power of this method relies on classical post-processing.  The work in \cite{mangini2022} reveals that if the noise is known, the expectation value of an observable can be obtained by measuring a modified observable on the noisy state. The method \cite{mangini2022} instructs deciding the modified observable and, by post-processing, cancels the noise impact. These results have been extended to multi-qudit channels \cite{roncallo2023}.
	
	The technique proposed in \cite{mangini2022} requires full knowledge of noise. In \cite{roncallo2023}, it has been shown that if the noise is unknown by initial state preparation and partial process tomography, it is possible to recover the noise-free expectation value. But the method is not efficient for a general noise model.  In this work, we introduce a new noise deconvolution technique that does not require full knowledge of noise and partial process tomography. This technique significantly improves our practical capability in eliminating noise effects.
	
	In our technique, we give instructions for determining a set of observables with completely recoverable expectation values and explain how to achieve this goal. Similar to the method introduced in \cite{roncallo2023, mangini2022}, our approach is generic, but unlike them, with our technique, complete recovery of the expectation value does neither require full knowledge of the noise nor initial state preparation. Our method is not restricted to a particular class of quantum noise models or system dimensions. We also discuss how one can go beyond the set of observables with completely correctable expectation values.  We explain how to reduce the noise effect on the expectation value of these additional observables. 
	
	The structure of the paper is as follows: In \S\ref{sec:Background} we set the notation and discuss the preliminaries. 
	We introduce our proposed method for noise deconvolution in \S\ref{sec:GenFormalism}. 
	We discuss full recovery of the expectation value of observables in \S\ref{sec:ComRec} with details on the general noise model and then focus on mixed unitary channels. 
	In \S\ref{sec:ParRec} we address observables  beyond the perfectly recoverable set.  Discussion of results and conclusion is presented in \S\ref{sec:Dis}.
	
	\section{background}
	\label{sec:Background}
	In this section, we first set our notation. Then we review the background on quantum channels and the noise deconvolution technique.
	
	In our analysis, the initial and final Hilbert spaces are $d$-dimensional and denoted by $\mathscr{H}_d$. The set of linear operators on ${\mathscr{H}}$ is denoted by $\mathcal{L}(\mathscr{H})$, and the set of trace-one positive operators, called density operators, is denoted by $\mathcal{D}(\mathscr{H}_d)$.
	
	\subsection{Quantum noise}
	In this subsection, we review the main definitions and properties of quantum channels that we require for quantum noise deconvolution. After the primary definition of a quantum channel, we discuss its superoperator representation of the channel, its adjoint, and inverse. 
	
	Quantum noise is described by a completely positive trace preserving (CPTP) map or a quantum channel. A quantum channel $\Phi:\mathcal{L}(\mathscr{H}_d)\to\mathcal{L}(\mathscr{H}_d)$ is represented by its Kraus operators $\{A_k\}$ as follows \cite{choi1975, kraus1983}:
	\begin{equation}
		\label{eq:KrausRep}
		\Phi(\bullet) = \sum_{k=1}^{K} A_k \bullet A_k^\dagger ,\;\;\; \sum_{k}A_k^\dagger A_k=\mathbb{I}_d,
	\end{equation}
	where $\bullet$ belongs to $\mathcal{L}(\mathscr{H}_d)$, and $\mathbb{I}_d$ is $d$-dimensional identity operator in $\mathcal{L}(\mathscr{H}_d)$.
	
	There is an equivalence relation between quantum channels. Channel $\Phi$ is equivalent to channel $\mathcal{E}$, if there exist unitary conjugations  $\mathcal{U}$ and $\mathcal{V}$, such that
	\begin{equation}
		{\Phi}=\mathcal{U}\circ\mathcal{E}\circ\mathcal{V}.
	\end{equation}
	The unitary conjugation by unitary operator $U$ is defined by $\mathcal{U}[\bullet]:=U\bullet U^\dagger$. For our purpose, it is more convenient to vectorize elements of $\mathcal{L}(\mathscr{H}_d)$ and represent the quantum channel as a linear transformation on a vector space $\mathscr{V}_{d^2}$ of dimension $d^2$. More specifically, by vectorization we transform any operator $M\in\mathcal{L}(\mathscr{H}_d)$ into a vector $\dket{M}\in\mathscr{H}_{d^2}$, with $M_{i,j}=\dket{M}_{id+j}$, for $i,j\in\{0,1,\cdots d-1\}$.
	By this notation, we can relate the inner product in 
	$\mathscr{H}_{d^2}$ 
	to
	Hilbert-Schmidt inner product in $\mathcal{L}(\mathscr{H}_d)$:
	\begin{equation}
		\forall X,Y\in\mathcal{L}(\mathscr{H}_d): \;\;\dbra{X}\dket{Y}:=\operatorname{Tr}(X^\dagger Y).
	\end{equation}
	
	By vectorization, the superoperator representation of  the quantum channel $\Phi$ in Eq.~(\ref{eq:KrausRep}) is given by
	\begin{equation}
		\label{eq:MatrixRep}
		\Gamma_{\Phi}=\sum_k A_k\otimes A_k^*,
	\end{equation}
	where $A_k^*$ is complex conjugation of $A_k$ and Eq.~(\ref{eq:KrausRep}) is transformed to
	\begin{equation}
		\dket{\Phi({\bullet})}=\Gamma_{\Phi}\dket{\bullet}.
	\end{equation}
	Another representation of the channel is given by its Choi-Jamio\l kowski matrix, defined as follows: 
	\begin{equation}
		\label{eq:ChoiJamiol}
		C_{\Phi}=(\Phi\otimes\operatorname{id})\ket{\Omega}\bra{\Omega},
	\end{equation}
	where $\operatorname{id}$ is the identity map and $\ket{\Omega}=\frac{1}{\sqrt{d}}\sum_{i=0}^{d-1}\ket{i,i}$ is the maximally entangled state in $d$-dimensional Hilbert space $\mathscr{H}_d$. With an involution $\rightthreetimes$, that is called reshuffling,  $\bra{i,j}C^{\rightthreetimes}_{\Phi}\ket{k,l}=\bra{i,k}C_{\Phi}\ket{j,l}$, the 
	Choi-Jamio\l kowski matrix is transformed to the superoperator representation of the channel:
	\begin{equation}
		\label{eq:ChoiMatrixForm}
		C^{\rightthreetimes}_{\Phi} =d\Gamma_{\Phi}.
	\end{equation}
	This enables us to construct the superoperator representation of the channel directly from its Choi-Jamio\l kowski representation. 
	It is worth recalling that the adjoint of the map $\Phi$ in Eq.~(\ref{eq:KrausRep}) is given by
	\begin{equation}
		\label{eq:AdjVecRep}
		\Phi_{\rm adj} (\bullet) := \hat{\Phi}(\bullet) = \sum_{k=1}^{K} A_k^\dagger \bullet A_k. 
	\end{equation}
	Hence, the adjoint of a completely positive (CP) map admits a Kraus representation, which proves that $\widehat{\Phi}$ is a CP map. The superoperator representation of $\Phi_{\rm adj}$ is given by
	\begin{equation}
		\label{eq:AdjKrausVecRep}
		\Gamma_{\widehat{\Phi}}=\sum_{k}A_k^\dagger\otimes A_k^\top,
	\end{equation}
	where $\top$ denotes transposition. By comparing Eq.~(\ref{eq:MatrixRep}) and Eq.~(\ref{eq:AdjKrausVecRep}) it becomes clear that
	\begin{equation}
		\label{eq:MatrixAdj}
		\Gamma_{\widehat{\Phi}}=\Gamma_\Phi^{\dagger}.
	\end{equation}
	Another useful map for our analysis is the inverse of a CPTP map $\Phi$, which we denote by
	$\Phi^{-1}$. By definition if $\Phi$ is invertible
	\begin{equation}
		\label{eq:InverseMap}
		\Phi\circ
		\Phi^{-1}=\Phi^{-1}\circ\Phi=\operatorname{id}.
	\end{equation}
	The inverse of a CPTP map is not necessarily a CPTP map, but as long as the determinant of $\Phi$ (which is equal to the determinant of the superoperator representation of $\Phi$, denoted by $\Gamma_{\Phi}$) is non-zero, $\Phi^{-1}$ exists. By using the superoperator representation of $\Phi$ and $\Phi^{-1}$ in Eq.~(\ref{eq:InverseMap}) we have:
	\begin{equation}
		\Gamma_{\Phi^{-1}}=\Gamma_{\Phi}^{-1}.
	\end{equation}
	By considering the adjoint of the composed map in Eq.~(\ref{eq:InverseMap}), it is seen that the adjoint of  the inverse of $\Phi$ is the inverse of the adjoint of $\Phi$:
	\begin{equation}
		\label{eq:AdjInv}
		\widehat{\Phi^{-1}}=\widehat{\Phi}^{-1}.
	\end{equation}
	Hence, we can switch the order of taking an adjoint and inverse of the map.  
	\subsection{Quantum noise deconvolution}
	This subsection is devoted to quantum noise deconvolution. We recall the main idea of the quantum noise deconvolution technique discussed in \cite{mangini2022, roncallo2023}. Also, we review the necessary basis of quantum process tomography. 
	
	The noise deconvolution technique is based on the following equality:
	\begin{align}
		\label{eq:mainND}
		\langle A\rangle_{\rho}&=\operatorname{Tr}(A\rho)
		=\operatorname{Tr}(A\Phi^{-1}\circ\Phi(\rho))\cr
		&=\operatorname{Tr}(\widehat{\Phi^{-1}}(A)
		{\Phi(\rho)})=\langle\widehat{\Phi}^{-1}(A)\rangle_{\Phi(\rho)}.
	\end{align}
	where $\langle\bullet\rangle_\rho=\operatorname{Tr}(\bullet \rho)$ denotes the expectation value of $\bullet$ with respect to the state $\rho$. 
	For the last equality, we have used Eq.~(\ref{eq:AdjInv}). Following the discussion after Eq.~(\ref{eq:AdjVecRep}), $\widehat{\Phi}$ is a CP map. Furthermore, in \cite{Jiaqing2021} it is proven that if a CP map is invertible, then its inverse is Hermiticity preserving. Therefore if $A$ is an observable, so is $\widehat{\Phi}^{-1}(A)$.  Therefore, 
	Eq.~(\ref{eq:mainND}) suggests that by measuring $\widehat{\Phi}^{-1}(A)$ over the noise-affected state, we can recover the noise-free expectation value $\langle A\rangle_{\rho}$. 
	
	This discussion clarifies that in this technique $\Phi^{-1}$ is not implemented physically; it is just used to construct a modified observable  $\widehat{\Phi}^{-1}(A)$. To illustrate that the actual recovery of the expectation value is done in post-processing, we need to recall the definition of the quorum set in quantum tomography.  
	In tomography, a set of measurable observables is called a quorum if the expectation value of any observable can be written in terms of the expectation of the observables belonging to the quorum \cite{DAriano2000}.  Denoting members of the quorum set by $Q_m$, due to the quorum's informational completeness, any observable $A$ can be written as
	
	\begin{equation}
		A=\sum_m a_m Q_m.
	\end{equation}
	Hence, to reconstruct the expectation value of $A$ it is enough to obtain the expectation value of the quorum elements:
	\begin{equation}
		\langle A\rangle_\rho =\sum_{m}a_m \langle Q_m\rangle_\rho.
	\end{equation}
	Different observables correspond to different sets of coefficients $a_m$. For the modified observable $ \widehat{\Phi}^{-1}(A)$ we have
	\begin{align}
		\label{eq:QuorumNoisy}
		\langle \widehat{\Phi}^{-1}(A) \rangle _{\Phi(\rho)}&=\sum_m a_m \langle\widehat{\Phi}^{-1}(Q_m)\rangle _{\Phi(\rho)}\cr
		&=\sum_{m, n} \chi_{n,m} a_m \langle Q_n\rangle  _{\Phi(\rho)},
	\end{align}
	where we have used the fact that $ \widehat{\Phi}^{-1}(Q_m)$ is an observable and hence can be written as a linear combinations of $Q_n$:
	\begin{equation}
		\widehat{\Phi}^{-1}(Q_m)=\sum_n\chi_{n,m} Q_n.
	\end{equation}
	Therefore, for the noise-affected state, the expectation values of $Q_m$ are necessary. The classical post-processing is responsible for computing the summation in Eq.~(\ref{eq:QuorumNoisy}), including the coefficients $\chi_{n,m}$. Hence,  the quantum noise deconvolution technique introduced in \cite{mangini2022, roncallo2023}, is based on classical post-processing and does not require any change in the setup or quantum hardware overhead. 
	\section{Method}
	\label{sec:GenFormalism}
	In this section, we introduce our approach for eliminating the noise effect on the expectation value of an observable when our knowledge of noise is partial. We introduce a measure to quantify the reduction of the noise effect. We discuss the full recovery and partial recovery of the expectation value of observables with our noise deconvolution technique. 
	
	The ultimate aim is to remove the noise effect on the expectation value of an observable $A\in\mathcal{L}(\mathscr{H})$ measured on a quantum state $\rho\in\mathcal{D}(\mathscr{H})$. When there is no noise, the expectation value of the observable is 
	\begin{equation}
		\label{eq:AIdeal}
		\langle A\rangle_{\rm ideal}=\operatorname{Tr}(\rho A)=\dbra{\rho}\dket{A}.
	\end{equation}
	In the presence of noise described by a quantum channel $\Phi$, the expectation value of the observable $A$ is 
	\begin{equation}
		\langle A\rangle_{\rm exp}=\operatorname{Tr}(A\Phi(\rho))=\operatorname{Tr}(\rho\hat{\Phi}(A))=\dbra{\rho}\dket{\hat{\Phi}(A)}.
	\end{equation}
	The deviation of 
	$\langle A\rangle_{\rm exp}$ from its ideal value 
	$\langle A\rangle_{\rm ideal}$,
	represents the error in reading the expectation value of 
	$A$ due to the noise in the experimental setting:
	\begin{align}
		\Delta A_{\rm exp}(\rho)&=\left|\langle A\rangle_{\rm ideal}-\langle A\rangle_{\rm exp}\right|\cr
		&=\left|\dbra{\rho}|\operatorname{id}-\Gamma_\Phi^\dagger\dket{A}\right|.
	\end{align}
	We aim to introduce a technique to remove or reduce the noise effects on the expectation value of a set of observables when we do not have complete knowledge of the noise.  With partial knowledge of noise, we mean that we know the noise belongs to a particular category of quantum channels, but its full characterization is missing. For example, consider a family of channels characterized by some parameters. 
	Knowledge of the family without the precise parameter values constitutes incomplete or partial information.
	 Another example relates to mixed unitary channels. 
	If we know the set of unitary errors but not the probability of having each noise, we say that our knowledge of noise is partial. 
	
	To achieve our goal,  based on the information we have from the noise, we guess the noise and denote it by $\Phi_g$. For example, if we know that the noise belongs to a family of quantum channels characterized by a set of parameters, we guess values for the parameters and construct $\Phi_g$ accordingly. Or if the noise is a mixed unitary channel with known set of unitary Kraus operators and unknown probability for having each, we guess the probability distribution of errors to construct $\Phi_g$. The following reasoning is independent on the quality of the initial guess, $\Phi_g$. If on the noise affected state we measure ${\widehat{\Phi}}^{-1}_g(A)$ and denote the derived expectation value by the noise-deconvolution technique with $\langle A\rangle_{\rm{ND}}$, we have
	\begin{align}
		\label{eq:AND}
		\langle A\rangle_{\rm{ND}}
		&:=\operatorname{Tr}\big({\widehat{\Phi}}^{-1}_g(A)\Phi(\rho)\big)\cr &=\operatorname{Tr}\big( \widehat{\Phi} \circ {\widehat{\Phi}}^{-1}_g(A)\rho\big)\cr 
		&=\operatorname{Tr}\big( \rho\widehat{\Phi}\circ{\widehat{\Phi}}^{-1}_g(A) \big)\cr
		&=\dbra{ \rho}|\Gamma_{\Phi}^\dagger\Gamma^{\dagger}_{\Phi_g^{-1}}\dket{A}.
	\end{align}
	Because our knowledge of noise is partial,  $\langle A\rangle_{\rm{ND}}$ deviates from its original value in the absence of noise $\langle A\rangle_{\rm{ideal}}$.
	The deviation of the expectation value of $A$ after noise deconvolution from its ideal value is denoted by
	\begin{align}
		\label{eq:measure}
		\Delta A_{\rm{ND}}(\rho)& :=  \left|\langle A\rangle_{\rm{ideal}} - \langle A\rangle_{\rm{ND}}\right|\cr
		&=\left|\dbra{\rho}| \mathcal{F}_{\Phi,\Phi_g}\dket{A}\right|,
	\end{align}
	in the last line we have used Eq.~(\ref{eq:AIdeal}) and 
	Eq.~(\ref{eq:AND}) and $\mathcal{F}_{\Phi,\Phi_g}$ is defined as follows
	\begin{equation}
		\label{eq:F}   \mathcal{F}_{\Phi,\Phi_g}:=\operatorname{id} - \Gamma^{\dagger}_\Phi \Gamma^{\dagger}_{\Phi_{g}^{-1}},
	\end{equation}
	If for a density operator $\rho\in\mathcal{D}(\mathscr{H})$ and an observable $A$ we have
	\begin{equation}
		\label{eq:improve}
		\Delta A_{\rm{ND}} (\rho)<\Delta A_{\rm{exp}}(\rho),
	\end{equation}
	then, by the noise deconvolution, the value of the expectation value becomes closer to its ideal value $\langle A\rangle_{\rm{ideal}} $.  
	When inequality in Eq.~(\ref{eq:improve}) holds, the noise deconvolution technique is successful and  $\Delta A_{\rm{ND}}(\rho)$ in Eq.~(\ref{eq:measure}) quantifies how much we can gain by following the procedure. Furthermore, if we have $\Delta A_{\rm ND}(\rho)=0$,  the gain is total and the target expectation value will be obtained by measuring $\widehat{\Phi}^{-1}_g(A)$ on noisy state $\Phi(\rho)$.  
	To make the formalism input-independent, we demand  $\Delta A_{\rm{ND}}(\rho)=0$ for all $\rho\in\mathcal{D}(\mathscr{H})$. In Eq.~(\ref{eq:measure}) if we require $\left|\dbra{\rho}| \mathcal{F}_{\Phi,\Phi_g}\dket{A}\right|=0$,  for all $\dket\rho\in\mathscr{H}_{d^2}$, it results that 
	\begin{equation}
		\dket{A}\in\operatorname{Ker}(\mathcal{F}_{\Phi,\Phi_g}).
	\end{equation}
	Therefore, to obtain the set of observables for which full recovery of the expectation value is possible by measuring $\widehat{\Phi}^{-1}_g(A)$ on $\Phi(\rho)$, we need to construct a general member of $\operatorname{Ker}(\mathcal{F}_{\Phi,\Phi_g})$. 
	After matricizing the solution, we impose the Hermitian constraint to obtain an acceptable set of observables with completely correctable expectation value. 
	\begin{remark}
		When our knowledge of the noise is complete, then the natural guess for the noise is itself: $\Phi_g=\Phi$. In this case, we have 
		\begin{equation}
			\langle A\rangle_{\rm ND}
			=\operatorname{Tr}
			(\rho\widehat{\Phi}\circ\widehat{\Phi}^{-1}(A))=\operatorname{Tr}(\rho A)=\langle A\rangle_{\rm ideal}.
		\end{equation}
		Therefore, by measuring the observable ${\widehat{\Phi}^{-1}}(A)$ after the noise action, one can read $\langle A\rangle_{\rm ideal}$. This special case is completely discussed in \cite{mangini2022} and \cite{roncallo2023} for single and multi-qubit Pauli channels.
	\end{remark}
	\begin{remark}
		If $\operatorname{Ker}(\mathcal{F}_{\Phi,\Phi_g})=\emptyset$ or after imposing the Hermitian condition on the matricized form of a general element of $\operatorname{Ker}(\mathcal{F}_{\Phi,\Phi_g})$ gives zero observable, one can still partially remove the noise effect if by proper guess of noise inequality in Eq.~(\ref{eq:improve}) is satisfied. 
	\end{remark}
	\section{Complete recovery}
	\label{sec:ComRec}
	In this section, we examine our approach for quantum noise deconvolution when knowledge of noise is partial. We address the features of our approach for a generic noise model. Then we focus on mixed unitary channels. In both cases, we provide examples for clarification of the main concept. 
	\subsection{General Noise model}
	In this subsection we discuss our method for a general noise model. After proving a theorem about equivalent channels, we provide two examples for further clarification of the topic. 
	\begin{theorem}
		\label{theorem:Equi}
		Assume that channel $\mathcal{E}$ is equivalent to channel $\Phi$:
		\begin{equation}
			\label{eq:equi2}
			\mathcal{E}=\mathcal{V}\circ\Phi\circ\mathcal{U},
		\end{equation}
		with $\mathcal{U}$ and $\mathcal{V}$ being unitary conjugation with respect to the unitary operators $U$ and $V$. If $\dket{A}\in\operatorname{Ker}(\mathcal{F}_{\Phi,\Phi_g})$ then
		\begin{equation}
			\Gamma_{\widehat{\mathcal{U}}}\dket{A}\in\operatorname{Ker}(\mathcal{F}_{\mathcal{E},\mathcal{E}_g}),
		\end{equation} 
		with
		\begin{equation}
			\mathcal{E}_g={\mathcal{V}}\circ\Phi_g\circ{\mathcal{U}}.
		\end{equation}
	\end{theorem}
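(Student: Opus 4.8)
The plan is to reduce the statement to a single intertwining identity between the two error operators, namely
\begin{equation}
	\mathcal{F}_{\mathcal{E},\mathcal{E}_g}=\Gamma_{\widehat{\mathcal{U}}}\,\mathcal{F}_{\Phi,\Phi_g}\,\Gamma_{\mathcal{U}},
\end{equation}
after which the kernel claim is immediate. Applying both sides to $\Gamma_{\widehat{\mathcal{U}}}\dket{A}$ and using that the matrix representation of a unitary conjugation is unitary, $\Gamma_{\mathcal{U}}=U\otimes U^{*}$ with $\Gamma_{\mathcal{U}}^{\dagger}=\Gamma_{\widehat{\mathcal{U}}}=\Gamma_{\mathcal{U}}^{-1}$, the factor $\Gamma_{\mathcal{U}}\Gamma_{\widehat{\mathcal{U}}}=\operatorname{id}$ collapses the right-hand side to $\Gamma_{\widehat{\mathcal{U}}}\,\mathcal{F}_{\Phi,\Phi_g}\dket{A}=0$, where the last equality is the hypothesis $\dket{A}\in\operatorname{Ker}(\mathcal{F}_{\Phi,\Phi_g})$. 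So the entire proof rests on establishing the intertwining relation above.

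To obtain it, I would first translate the compositions in Eq.~(\ref{eq:equi2}) and in the definition of $\mathcal{E}_g$ into products of matrix representations, being careful with the order: since $\dket{\Psi\circ\Psi'(\bullet)}=\Gamma_{\Psi}\Gamma_{\Psi'}\dket{\bullet}$, one has $\Gamma_{\mathcal{E}}=\Gamma_{\mathcal{V}}\Gamma_{\Phi}\Gamma_{\mathcal{U}}$ and $\Gamma_{\mathcal{E}_g}=\Gamma_{\mathcal{V}}\Gamma_{\Phi_g}\Gamma_{\mathcal{U}}$. Then I would treat the two ingredients of $\mathcal{F}_{\mathcal{E},\mathcal{E}_g}=\operatorname{id}-\Gamma^{\dagger}_{\mathcal{E}}\Gamma^{\dagger}_{\mathcal{E}_g^{-1}}$ separately, following Eq.~(\ref{eq:F}). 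Taking daggers of the first gives $\Gamma_{\mathcal{E}}^{\dagger}=\Gamma_{\mathcal{U}}^{\dagger}\Gamma_{\Phi}^{\dagger}\Gamma_{\mathcal{V}}^{\dagger}$. For the second I would use $\Gamma_{\mathcal{E}_g^{-1}}=\Gamma_{\mathcal{E}_g}^{-1}$ together with the adjoint–inverse interchange of Eq.~(\ref{eq:AdjInv}) and the unitarity of $\Gamma_{\mathcal{U}},\Gamma_{\mathcal{V}}$ (so that $\Gamma_{\mathcal{U}}^{-1}=\Gamma_{\mathcal{U}}^{\dagger}=\Gamma_{\widehat{\mathcal{U}}}$, and likewise for $\mathcal{V}$) to obtain $\Gamma^{\dagger}_{\mathcal{E}_g^{-1}}=\Gamma_{\mathcal{V}}\,\Gamma^{\dagger}_{\Phi_g^{-1}}\,\Gamma_{\mathcal{U}}$.

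Multiplying the two pieces, the decisive cancellation is that the inner $\mathcal{V}$-factors meet as $\Gamma_{\mathcal{V}}^{\dagger}\Gamma_{\mathcal{V}}=\operatorname{id}$, leaving
\begin{equation}
	\Gamma^{\dagger}_{\mathcal{E}}\Gamma^{\dagger}_{\mathcal{E}_g^{-1}}
	=\Gamma_{\mathcal{U}}^{\dagger}\,\Gamma^{\dagger}_{\Phi}\Gamma^{\dagger}_{\Phi_g^{-1}}\,\Gamma_{\mathcal{U}}.
\end{equation}
Writing $\operatorname{id}=\Gamma_{\mathcal{U}}^{\dagger}\operatorname{id}\,\Gamma_{\mathcal{U}}$ and factoring then recovers the intertwining relation. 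The only genuine bookkeeping hazards, and hence where I expect to have to be most careful, are (i) getting the composition order right so that it is $\mathcal{U}$, not $\mathcal{V}$, that survives as the conjugating factor in the final identity, and (ii) correctly commuting the adjoint past the inverse in $\Gamma^{\dagger}_{\mathcal{E}_g^{-1}}$ via Eq.~(\ref{eq:AdjInv}); everything else is forced by the unitarity of $\Gamma_{\mathcal{U}}$ and $\Gamma_{\mathcal{V}}$. I note that the argument never uses any property of $\Phi$ or $\Phi_g$ beyond invertibility of the guess, so it holds for arbitrary equivalent channels.
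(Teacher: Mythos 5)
Your proposal is correct and takes essentially the same route as the paper's proof: the same decomposition $\Gamma_{\mathcal{E}}=\Gamma_{\mathcal{V}}\Gamma_{\Phi}\Gamma_{\mathcal{U}}$, the same cancellation of the inner $\mathcal{V}$-factors via unitarity, and the same adjoint--inverse interchange to get $\Gamma^{\dagger}_{\mathcal{E}_g^{-1}}=\Gamma_{\mathcal{V}}\Gamma^{\dagger}_{\Phi_g^{-1}}\Gamma_{\mathcal{U}}$. The only difference is organizational: you package the computation as the operator identity $\mathcal{F}_{\mathcal{E},\mathcal{E}_g}=\Gamma_{\widehat{\mathcal{U}}}\,\mathcal{F}_{\Phi,\Phi_g}\,\Gamma_{\mathcal{U}}$ before touching any vector (which incidentally gives the marginally stronger conclusion that the two kernels are in exact unitary correspondence), whereas the paper performs the same cancellations directly on $\dket{A}$.
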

	\begin{proof}
		By assumption $\dket{A}\in\operatorname{Ker}(\mathcal{F}_{\Phi,\Phi_g})$. Hence,
		\begin{equation}
			\label{eq:AinKer}
			\Gamma_\Phi^\dagger\Gamma_{\Phi_g^{-1}}^\dagger\dket{A}=\dket{A}.
		\end{equation} 
		Due to the equivalence property in Eq.~(\ref{eq:equi2}) we have
		\begin{equation}
			\label{eq:eqiGamma}
			\Gamma_{\mathcal{E}}=\Gamma_\mathcal{V}\Gamma_\Phi\Gamma_\mathcal{U}.
		\end{equation}
		By combining Eq.~(\ref{eq:AinKer}) and Eq.~(\ref{eq:eqiGamma}) and considering that $\Gamma_{\widehat{\mathcal{U}}}\Gamma_{{\mathcal{U}}}=\Gamma_{\mathcal{U}}\Gamma_{\widehat{\mathcal{U}}}=\operatorname{id}$ we have
		\begin{equation}
			\label{eq:InterStep}
			\Gamma^\dagger_\mathcal{E}\Gamma_\mathcal{V}\Gamma^{\dagger}_{\Phi_g^{-1}}\Gamma_{\mathcal{U}}\dket{B}=\dket{B},
		\end{equation}
		with
		\begin{equation}
			\label{eq:B}
			\dket{B}:=\Gamma_{\widehat{\mathcal{U}}}\dket{A},\;\;{\rm{or}}\;\;B=U^\dagger A U.
		\end{equation}
		Equation~(\ref{eq:InterStep}) implies that $\dket{B}$ belongs to the kernel of $\mathcal{F}_{\mathcal{E},\mathcal{E}_g}$ where 
		\begin{equation}
			\Gamma_{\mathcal{E}_g}=\Gamma_\mathcal{V}\Gamma_{\Phi_g}\Gamma_{\mathcal{U}}.
		\end{equation}
		Or equivalently
		\begin{equation}
			\label{eq:Eg}
			\mathcal{E}_g=\mathcal{V}\circ\Phi_g\circ\mathcal{U}.
		\end{equation}
	\end{proof}
	Theorem~\ref{theorem:Equi}  implies that
	when a family of correctable observables $A$ is derived for a noise model $\Phi$, 
	for all equivalent channels $\mathcal{E}$ as in Eq.~(\ref{eq:equi2}) with guess channels as in Eq.~(\ref{eq:Eg}),  the family of correctable observables is just a unitary conjugation of $A$: $U^\dagger A U$ (see Eq.~(\ref{eq:B})). Hence, the number of parameters of correctable observables for all equivalent channels as in Eq.~(\ref{eq:equi2}) with equivalent guess channels as in
	Eq.~(\ref{eq:Eg}) are the same. 
	
	To illustrate the power of our method, we proceed with an example of noise on a qutrit system. We know the noise belongs to a family of channels with one unknown parameter. 
	\begin{example}\label{exp:qutrit}
		
		In this example we consider a qutrit channel,
		which is not a mixed unitary channel but unital. It is also known as one of the extreme points of the set of qutrit channels:
		\begin{equation}
			\label{eq:PhiW}
			\Phi(\rho)=\sum_{k=1}^3A_k\rho A_k^\dagger,
		\end{equation}
		with Kraus operators:
		\begin{align}
			&A_1=\frac{1}{\sqrt{2}}
			\begin{pmatrix}
				0&0&0\\
				0&1&0\\
				0&0&-1
			\end{pmatrix},\; A_2=\frac{1}{\sqrt{2}}\begin{pmatrix}
				0&-1&0\\
				0&0&0\\
				\omega&0&0
			\end{pmatrix}\cr
			&A_3=\frac{1}{\sqrt{2}}\begin{pmatrix}
				0&0&1\\
				-\omega&0&0\\
				0&0&0
			\end{pmatrix},
		\end{align}
		where $\omega=e^{i\phi}$. It is known that the parameter $\phi\in[0,2\pi)$ but we do not know the exact value of $\phi$. Hence, our knowledge of noise is not complete.  We assume that the value of $\phi=0$. Therefore, the guess channel is described by
		\begin{equation}
			\Phi_g(\rho)=\sum_{k=1}^3 G_k\rho G_k^\dagger,
		\end{equation}
		with Kraus operators $G_k$s:
		\begin{align}
			&G_1=\frac{1}{\sqrt{2}}
			\begin{pmatrix}
				0&0&0\\
				0&1&0\\
				0&0&-1
			\end{pmatrix},\; G_2=\frac{1}{\sqrt{2}}\begin{pmatrix}
				0&-1&0\\
				0&0&0\\
				1&0&0
			\end{pmatrix}\cr
			&G_3=\frac{1}{\sqrt{2}}\begin{pmatrix}
				0&0&1\\
				-1&0&0\\
				0&0&0
			\end{pmatrix}.
		\end{align}
		Following the general formalism introduced in \S\ref{sec:GenFormalism}, by matrix representation of $\Phi$ and $\Phi_g$ we construct $\mathcal{F}_{\Phi,\Phi_g}$:
		\begin{equation}
			\mathcal{F}_{\Phi,\Phi_g}=\operatorname{diag}(0,1-\bar{\omega},1-\bar{\omega},1-\omega,0,0,1-\omega,0,0).
		\end{equation}
		Therefore the general element of the Kernel of 	$\mathcal{F}_{\Phi,\Phi_g}$ is given by
		\begin{equation}
			\label{eq:dektAExample}
			\dket{A}=\begin{pmatrix}
				a&0&0&0&b&c&0&d&e
			\end{pmatrix}^\top.
		\end{equation}
		By matricizing $\dket{A}$ and imposing the Hermitian constraint we have
		\begin{equation}
			\label{eq:CorrObsW}
			A=\begin{pmatrix}
				a&0&0\\
				0&b&c\\
				0&\bar{c}&e
			\end{pmatrix}, a,b,e\in\mathbb{R},\;c\in\mathbb{C}.
		\end{equation}
		Therefore, for a five-parameter family of observables of the form in Eq.~(\ref{eq:CorrObsW}), it is possible to remove the noise effect on the expectation value of these observables by measuring the modifed observable $\widehat{\Phi}^{-1}_g(A)$. For obtaining the modified observable, we note that
		\begin{equation}
			\Gamma_{\Phi_g}=\frac{1}{\sqrt{2}}\begin{pmatrix}
				0&-G_2&G_3\\
				-G_3&G_1&0\\
				G_2&0&-G_1
			\end{pmatrix}
		\end{equation}
		With solving a set of linear equations, the inverse of $\Phi_g$ is found to be
		\begin{equation}
			\label{eq:InvPhig}
			\Gamma_{\Phi_g^{-1}}=\begin{pmatrix}
				\mathcal{X}&\mathcal{P}&\mathcal{Q}\\
				\mathcal{P}^\top&\mathcal{Y}&\mathcal{R}\\
				\mathcal{Q}^\top&\mathcal{R}^\top&\mathcal{Z}
			\end{pmatrix}
		\end{equation}
		with
		\begin{align}
			&\mathcal{X}=\operatorname{diag}(-1,0,0),\\
			&\mathcal{Y}=\operatorname{diag}(0,1,-2),\\
			&\mathcal{Z}=\operatorname{diag}(0,-2,1),
		\end{align}
		and
	   \begin{equation}
	   	\mathcal{P}=\begin{pmatrix}
	   		0&1&0\\
	   		0&0&0\\
	   		-2&0&0
	   	\end{pmatrix},\;
	   	\mathcal{Q}=\begin{pmatrix}
	   	0&0&1\\
	   	-2&0&0\\
	   	0&0&0
	   	\end{pmatrix},\;
	   	\mathcal{R}=\begin{pmatrix}
	   	0&0&0\\
	   	0&0&-1\\
	   	0&0&0
	   	\end{pmatrix},\;
	   \end{equation}
		By using Eq.~(\ref{eq:InvPhig}), the action of $\Gamma^\dagger_{\Phi_g^{-1}}$ on $\dket{A}$ in Eq.~(\ref{eq:dektAExample}) is found. After matricizing that we have
		\begin{equation}
			\label{eq:PhiInvAdjExample}
			\widehat{\Phi}^{-1}_g(A)=\begin{pmatrix}
				-a+b+e&0&0\\
				0&a+b-e&-2c\\
				0&-2\bar{c}&a-b+e
			\end{pmatrix}.
		\end{equation}
			
		For qutrit systems, the whole set of observables has nine real parameters. By this method, the error can be removed from the expectation value of a subfamily of five parameters. Regarding Theorem~\ref{theorem:Equi}, the same can be done for all noise models which are unitarily equivalent to the channel in Eq.~(\ref{eq:PhiW}).
	\end{example}
	Another interesting case is when the noise is a convex combination of known channels, but the weight of each channel in the convex combination is not known. The following is an example of this case:
	\begin{example}
		Here, we consider a two-qubit bit-flip channel with partial memory:
		\begin{equation}
			\label{eq:BF}
			\Phi_{\rm bf}(\bullet)=(1-\mu)\Phi_{\rm bf}^{\rm un}(\bullet)+\mu\Phi^{\rm cc}_{\rm bf}(\bullet),
		\end{equation}
		with $\mu\in[0,1]$ and the uncorrelated and completely correlated channels given by
		\begin{align}    
			&\Phi^{\rm{uc}}_{\rm{bf}}(\rho):= \sum_{i,j=0}^{1} p_i p_j (\sigma_i \otimes \sigma_j)\rho (\sigma_i \otimes \sigma_j), \cr
			&\Phi^{\rm{cc}}_{\rm{bf}}(\rho) := \sum_{i=0}^{1} p_i (\sigma_i \otimes \sigma_i)\rho (\sigma_i \otimes \sigma_i).
		\end{align}
		Here, $p_0=1-p$ and $p_1=p$, with $p\in[0,1]$ denoting the probability of having a bit-flip error. $\sigma_0=\mathbb{I}_2$ is two dimensional identity matrix and $\sigma_1=\sigma_x=\begin{pmatrix}
			0&1\cr
			1 &0
		\end{pmatrix}$ is the usual Pauli matrix. 
		We assume that the value of $p$ is known to us, hence we have complete knowledge about
		$\Phi^{\rm{uc}}_{\rm{bf}}$ and
		$\Phi^{\rm{cc}}_{\rm{bf}}$ appearing in Eq.~(\ref{eq:BF}). But the actual value of $\mu$ is not known to us. 
		
		The superoperator  representations of the two-qubit bit flip channels with partial memory are given by
		\begin{align}
			\Gamma_{\Phi_{\rm bf}^{\rm uc}}&=p(1-p)((\mathbb{I}_2\otimes\sigma_1)^{\otimes 2}+(\sigma_1\otimes\mathbb{I}_2)^{\otimes 2})\cr
			&+(1-p)^2\mathbb{I}^{\otimes 4}+p^2\sigma_1^{\otimes 4},\cr
			\Gamma_{\Phi_{\rm bf}^{\rm cc}}&=(1-p)\mathbb{I}^{\otimes 4}+p\sigma_1^{\otimes 4}.
		\end{align}
		If we guess that $\mu=1$, then 
		\begin{equation}
			\label{eq:PhigBF}
			\Phi_g=\Phi^{\rm{cc}}_{\rm{bf}}.
		\end{equation}
		The superoperator representation of $\Phi^{\rm{cc}}_{\rm{bf}}$ is invertible for $p\neq\frac{1}{2}$:
		\begin{equation}
			\label{eq:InvBFCC}
			\Gamma^{-1}_{\Phi^{\rm cc}_{\rm bf}}=\frac{1-p}{1-2p}\mathbb{I}^{\otimes 4}-\frac{p}{1-2p}\sigma_1^{\otimes 4}.
		\end{equation}
		By using Eqs.~(\ref{eq:F}) and (\ref{eq:InvBFCC}) and considering that 
		\begin{equation}
			\Gamma_{\Phi_{\rm bf}}=(1-\mu)\Gamma_{\Phi^{\rm{uc}}_{\rm bf}}+\mu \Gamma_{\Phi^{\rm{cc}}_{\rm bf}},
		\end{equation}
		we have
		\begin{align}
			\mathcal{F}_{\Phi_{\rm bf}, \Phi^{\rm cc}_{\rm bf}}&=(1-\mu)p(1-p)\cr
			&\times \left(\mathbb{I}_{16}-
			(\mathbb{I}_2\otimes\sigma_1)^{\otimes 2}-(\sigma_1\otimes\mathbb{I}_2)^{\otimes 2}+\sigma_1^{\otimes 4}\right),\cr
		\end{align}
		where $\mathbb{I}_{16}$ is the identity matrix with dimension 16. 
		The Kernel of 	$\mathcal{F}_{\Phi_{\rm bf}, \Phi^{\rm cc}_{\rm bf}}$ is found to be
		\begin{align}
			\operatorname{Ker}\left(\mathcal{F}_{\Phi_{\rm bf}, \Phi^{\rm cc}_{\rm bf}}\right)=\operatorname{Span}\{\ket{++++},\ket{----},&\cr
			\ket{+++-},\ket{++-+},\ket{+-++},\ket{-+++},&\cr
			\ket{---+},\ket{--+-},\ket{-+--},\ket{+---},&\cr
			,\ket{+-+-},\ket{-+-+}\},
		\end{align}
		where $\ket{+}$ and $\ket{-}$ are respectively eigenstates of $\sigma_x$ with eigenvalues $+1$ and $-1$. By matricizing a general element of the $	\operatorname{Ker}\left(\mathcal{F}_{\Phi_{\rm bf}, \Phi^{\rm cc}_{\rm bf}}\right)$ and imposing the Hermiticity constraint, we find a 12 parameter family of observables with correctable expectation values
		\begin{align}
			\label{eq:BFACor}
			A&=\sum_{i=0}^1\sum_{j=2}^3 \left(a_{i,j}\sigma_i\otimes\sigma_j+b_{i,j}\sigma_j\otimes\sigma_i\right)\cr
			&+\sum_{i,j=0}^1 c_{i,j}\sigma_i\otimes\sigma_j.
		\end{align}
		where $a_{i,j}, b_{i,j}, c_{i,j} \in\mathbb{R}$, $\sigma_2=\begin{pmatrix}
			0&-i\\
			i&0
		\end{pmatrix},\;\sigma_3=\begin{pmatrix}
			1&0\\
			0&-1
		\end{pmatrix}$. A general two qubit observable has 16 real parameter. Our proposed noise deconvolution technique introduces a subset of 12-parameter observables with correctable expectation value. At the noise affected system, it is enough to measure
		
		\begin{align}
			\hat{\Phi}^{-1}_g(A)&=\frac{1}{(1-2p)}\sum_{i=0}^1\sum_{j=2}^3 \left(a_{i,j}\sigma_i\otimes\sigma_j+b_{i,j}\sigma_j\otimes\sigma_i\right)\cr
			&+\sum_{i,j=0}^1 c_{i,j}\sigma_i\otimes\sigma_j,
		\end{align}
	
		and make the correction by classical post-processing.
	\end{example}

	\subsection{Mixed Unitary channels}
	In this section, we present how the general formalism we introduced in \S{\ref{sec:GenFormalism} is applied to a mixed unitary channel. First, we clarify what we mean by the partial knowledge of the channel. Then we consider a convex combination of two mixed unitary channels acting on $\mathcal{L}(\mathscr{H}_d)$. We continue with a convex combination of unitary channels on two qubits.
		
		A mixed unitary channel is given by
		\begin{equation}
			\label{eq:RUchannel}
			\Phi(\rho)=\sum_{k}p_kU_k\rho U_k^\dagger,\;\; U_k^\dagger U_k=\mathbb{I}_d \;\forall k, 
		\end{equation}
		where $p_k>0, \forall k$ and $\sum_{k}p_k=1$. We assume that the set of unitary operators $\{U_k\}$ is known, but the probability distribution $\{p_k\}$ is unknown. Hence, our knowledge of the noise is partial. 
		\begin{theorem}
			\label{theorem:RU}
			For a mixed unitary channel as in Eq.~(\ref{eq:RUchannel}), when the probability distribution $\{p_k\}$ is unknown,  observables $A$ with correctable expectation value satisfy the following constraint:
			\begin{equation}
				\label{eq:AB}
				U_kA U_k^\dagger =U_l A U_l^\dagger ,\;\; \forall k , l. 
			\end{equation}
		\end{theorem}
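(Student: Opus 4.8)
The plan is to translate the kernel condition $\dket{A}\in\operatorname{Ker}(\mathcal{F}_{\Phi,\Phi_g})$ into a transparent operator identity and then exploit that the true probabilities $\{p_k\}$ are unknown. For a random unitary channel the natural guess is another random unitary channel built from the \emph{same} set of unitaries, $\Phi_g(\rho)=\sum_k q_k U_k\rho U_k^\dagger$, with $\{q_k\}$ a guessed distribution. First I would record the matrix representation $\Gamma_\Phi^\dagger=\sum_k p_k\,U_k^\dagger\otimes U_k^\top$ and, using the background identities $\Gamma_{\widehat\Phi}=\Gamma_\Phi^\dagger$, $\widehat{\Phi^{-1}}=\widehat\Phi^{-1}$ and $\Gamma_{\Phi^{-1}}=\Gamma_\Phi^{-1}$, derive $\Gamma_{\Phi_g^{-1}}^\dagger=(\Gamma_{\Phi_g}^\dagger)^{-1}$. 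With these, the defining condition $\Gamma_\Phi^\dagger\Gamma_{\Phi_g^{-1}}^\dagger\dket{A}=\dket{A}$ rewrites, in map language, as $\widehat\Phi\circ\widehat{\Phi_g}^{-1}(A)=A$.

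Next I would introduce the auxiliary operator $C:=\widehat{\Phi_g}^{-1}(A)$, so that $A=\widehat{\Phi_g}(C)=\sum_k q_k U_k^\dagger C U_k$ and the correctability condition collapses to $\widehat\Phi(C)=A$, that is
\begin{equation}
\sum_k p_k\,U_k^\dagger C U_k = A .
\end{equation}
The key observation is that the right-hand side is fixed, whereas the left-hand side depends on the unknown $\{p_k\}$. Since the entire purpose of the method is to correct the expectation value \emph{without} knowing $\{p_k\}$, I would demand that this identity hold for every admissible distribution. The left-hand side is an affine function of $(p_1,\dots,p_K)$ on the probability simplex, and such a function can be constant on the simplex only if its values at the vertices coincide; this forces $U_k^\dagger C U_k=A$ for every $k$ (the constraint $p_k>0$ is no obstruction, since the interior of the simplex is dense and an affine map constant on a dense set is constant).

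From $U_k^\dagger C U_k=A$ for all $k$ I would conclude $C=U_k A U_k^\dagger$ for all $k$, so that every conjugate $U_k A U_k^\dagger$ equals the single operator $C$; setting $B:=C$ yields exactly $U_k A U_k^\dagger=B$ for all $k$. As a consistency check one verifies $A=\sum_k q_k U_k^\dagger C U_k=\sum_k q_k A=A$ and $\widehat\Phi(C)=\sum_k p_k U_k^\dagger C U_k=A$, which shows the constraint is also sufficient and, pleasingly, independent of the guessed distribution $\{q_k\}$, in line with the earlier remark that the argument is valid for any guess. I expect the main obstacle to be conceptual rather than computational: pinning down the quantifier in the step ``correctable despite unknown $\{p_k\}$ $\Rightarrow$ the identity must hold for all $\{p_k\}$,'' and justifying the passage from the open simplex to its closure so that effectively evaluating at the deterministic distributions $p=e_j$ is legitimate. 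Once that point is settled, the remaining algebra is immediate.
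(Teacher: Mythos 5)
Your proof is correct, and its logical engine is the same as the paper's: because $\{p_k\}$ is unknown, correctability is demanded for every admissible distribution, and since the resulting identity is affine in $(p_1,\dots,p_K)$ it reduces to the vertex conditions, one for each $k$, which matricize to the common-conjugate constraint in Eq.~(\ref{eq:AB}). The genuine difference is the choice of guess channel. The paper takes a \emph{vertex} guess, $\Phi_g(\bullet)=U_g\bullet U_g^\dagger$ with $U_g$ one of the known errors, so that $\widehat{\Phi}_g^{-1}$ exists automatically; it then writes $\mathcal{F}_{\Phi,\Phi_g}=\sum_{i\neq g}p_i\bigl(\operatorname{id}-U_i^\dagger U_g\otimes U_i^\top U_g^*\bigr)$ and characterizes the kernel as the intersection of the invariant subspaces $\mathcal{S}_i$ of the operators $U_i^\dagger U_g\otimes U_i^\top U_g^*$ --- a concrete computational recipe that the paper reuses in its examples and in Theorem~\ref{theorem:UV}. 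You instead allow an arbitrary mixture guess $\Phi_g=\sum_k q_k U_k\bullet U_k^\dagger$ and work with the auxiliary operator $C=\widehat{\Phi}_g^{-1}(A)$. This buys three things the paper only gets implicitly: the conclusion (and the operator $B=C$) is manifestly independent of the guessed distribution over the whole simplex of guesses; your converse check upgrades the statement to an equivalence; and the modified observable to be measured is identified as $B$ itself --- which, specialized to the vertex guess, reads $\widehat{\Phi}_g^{-1}(A)=U_g A U_g^\dagger=B$ and incidentally corrects the dagger placement in the formula $U_g^\dagger A U_g$ printed at the end of the paper's proof. The one caveat you should state explicitly is that $C$ is well defined only when $\widehat{\Phi}_g$ is invertible, which can fail for interior guesses (e.g.\ the uniform mixture of the qubit Pauli unitaries is the completely depolarizing channel); this is harmless, since the definition of $\langle A\rangle_{\rm ND}$ in Eq.~(\ref{eq:AND}) already presupposes an invertible guess and one can always retreat to a vertex guess as the paper does, but it deserves a sentence in your argument.
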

		\begin{proof}
			As the set of Kraus operators contributing to the noise is known, we choose one of them as our guess, e.g, unitary conjugation by $U_g$: 
			\begin{equation}
				\label{eq:GuessUnitary}
				\Phi_g(\bullet)=U_g\bullet U_g^\dagger. 
			\end{equation}
			The inverse and the adjoint of a unitary conjugation are the same: $\mathcal{E}^{-1}_g=\widehat{\mathcal{E}_g}$ or equivalently for the  superoperator representation of the guess channel we have,
			\begin{equation}  
				\label{eq:GammaGuessUnitary}
				\Gamma^\dagger_{\mathcal{E}^{-1}_g}=\Gamma_{\mathcal{E}_g}.
			\end{equation}
			Therefore, for $\mathcal{F}_{\Phi,\Phi_g}$ defined in Eq.~(\ref{eq:F}) we have
			\begin{equation}
				\mathcal{F}_{\Phi,\Phi_g}=\sum_{i\neq g} p_i (\operatorname{id}-\Gamma_i),
			\end{equation}
			with
			\begin{equation}
				\label{eq:Gi}
				\Gamma_i=U_i^\dagger U_g\otimes U_i^\top U_g^*.
			\end{equation}
			Because we do not know the probability of each error $U_i$, to find the observable with correctable expectation value we require $\mathcal{F}_{\Phi,\Phi_g}\dket{A}=0$ for  any vector of probability distribution $\{p_i\}_i$ which is equivalent to solving the following:
			\begin{equation}
				\label{eq:Stab|Gammai}
				\Gamma_i\dket{A}=\dket{A},\;\;\forall i.
			\end{equation}
			When $i=g$,  $\Gamma_g=\operatorname{id}\otimes\operatorname{id}$ and the above equality is trivial. From equation (\ref{eq:Stab|Gammai}), which states that $\dket{A}$ is the invariant state under all  $\Gamma_i$s and definition of $\Gamma_i$ in Eq.~(\ref{eq:Gi})  we conclude that
			\begin{equation}
				\label{eq:RU-C}
				U_g AU_g^\dagger =U_i A U_i^\dagger,\;\;\forall i,
			\end{equation}
		     which is equivalent to the conclusion of the Theorem as stated in Eq.~(\ref{eq:AB}).
		     		\end{proof}
		     		Theorem~\ref{theorem:RU} gives the constraints that are satisfied by an observable $A$ with correctable expectation value, when noise is a mixed unitary channel and the probability of each error is unknown. Furthermore, for such noise models, Theorem~\ref{theorem:RU} gives the instruction for constructing the set of observable with correctable expectation values. To find the desired set of observables one should find $\mathcal{S}_i$, the set of invariant vectors under the action of $\Gamma_i$ 
			\begin{equation}
				\mathcal{S}_i=\{\dket{v}\in\mathscr{H}_{d^2} \;\;| \;\;\Gamma_i\dket{v}=\dket{v}\}.
			\end{equation}
			The intersection of these sets determines $\dket{A}$:
			\begin{equation}
				\dket{A}\in\bigcap_{i}\mathcal{S}_i.
			\end{equation}
			After obtaining $\dket{A}$, it must be matricized, and then the Hermitian constraint $A=A^\dagger$ must be imposed. This may reduce the number of free parameters in $A$. 
			By measuring 
			\begin{equation}
				\hat{\Phi}_g^{-1}(A)=U_g^\dagger  AU_g,
			\end{equation}
			we can completely recover the expectation value of $A$ for all possible initial states. 
		From Eq.~(\ref{eq:RU-C}), it is clear that the set of observables with correctable expectation value is independent of our choice for $U_g$. For any choice of $U_g$, observable $\hat{\Phi}_g^{-1}(A)=U_g^\dagger  AU_g$,  must be measured on $\Phi(\rho)$, and perform classical post-processing to have $\Delta A_{\rm ND}(\rho)=0$ for all density operators $\rho$. 
		
		The following is an example of a mixed unitary channel for modeling noise. 
		\begin{example}
			Consider a mixed unitary channel of the form in Eq.~(\ref{eq:RUchannel}) with
			
			\begin{align}
				&U_1=\frac{1}{\sqrt{12}}\begin{pmatrix}
					\sqrt{2}+1&  \sqrt{6} &\sqrt{2}-1\\
					\sqrt{2}+1 & -\sqrt{6} & \sqrt{2}-1\\
					\sqrt{2}-2 & 0 & \sqrt{2}+2
				\end{pmatrix},\\
				&U_2=\frac{1}{\sqrt{12}}\begin{pmatrix}
					\sqrt{2}+1&  i\sqrt{6} &\sqrt{2}-1\\
					\sqrt{2}+1 & -i\sqrt{6} & \sqrt{2}-1\\
					\sqrt{2}-2 & 0 & \sqrt{2}+2
				\end{pmatrix},\\
				&U_3=\frac{1}{\sqrt{12}}\begin{pmatrix}
					\sqrt{2}+i&  i\sqrt{6} &\sqrt{2}-i\\
					\sqrt{2}+i & -i\sqrt{6} & \sqrt{2}-i\\
					\sqrt{2}-2i & 0 & \sqrt{2}+2i
				\end{pmatrix}.
			\end{align}
			As discussed, our knowledge of noise is partial, and we do not know the probability of having different $U_i$s as errors. Without loss of generality, we take $U_g=U_1$. The subspace which is invariant  under $\Gamma_2$ (defined in Eq.~(\ref{eq:Gi})) is given by
			\begin{equation}
				\mathcal{S}_2=\operatorname{Span}\{\ket{\lambda_1,\lambda_1}, \ket{\lambda_2,\lambda_2}, \ket{\lambda_3,\lambda_3}, \ket{\lambda_1,\lambda_3}, \ket{\lambda_3,\lambda_1}\},
			\end{equation}
			where $\ket{\lambda_1},\ket{\lambda_2}$ and $\ket{\lambda_3}$ are respectively eigenvectores of $U_2^\dagger U_1$ with eigenvalues $1,-i,1$:
			\begin{equation}
				\ket{\lambda_1}=\begin{pmatrix}
					1\\0\\0
				\end{pmatrix},\;\ket{\lambda_2}=\begin{pmatrix}
					0\\1\\0
				\end{pmatrix},\;
				\ket{\lambda_3}=\begin{pmatrix}
					0\\0\\1
				\end{pmatrix}.
			\end{equation}
			The subspace invariant under $\Gamma_3$ is given by
			\begin{equation}
				\mathcal{S}_3=\operatorname{Span}\{\ket{\mu_1,\mu_1}, \ket{\mu_2,\mu_2}, \ket{\mu_3,\mu_3}, \ket{\mu_1, \mu_3}, \ket{\mu_3,\mu_1}\},
			\end{equation}
			where $\ket{\mu_1}, \ket{\mu_2}$ and $\ket{\mu_3}$ are respectively eigenvector of $U_3^\dagger U_1$ with eigenvalues $-i,1,-i$:
			\begin{equation}
				\ket{\mu_1}=\frac{1}{\sqrt{2}}\begin{pmatrix}
					1\\0\\-1
				\end{pmatrix},\;\ket{\mu_2}=\frac{1}{\sqrt{2}}\begin{pmatrix}
					1\\0\\1
				\end{pmatrix},\;
				\ket{\mu_3}=\begin{pmatrix}
					0\\1\\0
				\end{pmatrix}.
			\end{equation}
			By finding the intersection of $\mathcal{S}_2$ and $\mathcal{S}_3$, matricizing a general element of the intersection and then imposing the Hermitian constraint, we get that observables with correctable expectation values belong to the set of three-parameter observables of the form:
			\begin{equation}
				\begin{pmatrix}
					a &0 & b\\
					0 & c & 0\\
					b & 0 & a
				\end{pmatrix},
			\end{equation}
			where $a,b,c\in\mathbb{R}.$
		\end{example}
		As discussed in Theorem~\ref{theorem:RU}, to obtain observables 
		for which $\Delta A_{\rm ND}(\rho)=0,\;\forall\rho\in\mathcal{D}(\mathscr{H})$ one has to construct 
		the sets $\mathcal{S}_i$ and their intersection. If $\cap_i\mathcal{S}_i=\emptyset$, or if the matricized form of elements in $\cap_i\mathcal{S}_i$ do not admit Hermitian matrix form, then by this approach we can not remove the noise effect on the expectation value of any observable. The following lemma, introduces a subset of mixed unitary channels, for which the instruction of Theorem~\ref{theorem:RU}  does not provide any non-trivial observable with correctable expectation value. 
		\begin{lemma}\label{lemma:IrepU}
			For noise models in Eq.~(\ref{eq:RUchannel}), if $U_i$ are unitary irreducible representations of some groups, then the only observable for which $\Delta A_{\rm ND}(\rho)=0$ for all $\rho\in\mathcal{D}(\mathscr{H})$,  is proportional the identity operator. 
		\end{lemma}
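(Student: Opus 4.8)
The plan is to reduce the claim to Schur's lemma. By Theorem~\ref{theorem:RU}, an observable $A$ satisfies $\Delta A_{\rm ND}(\rho)=0$ for all $\rho\in\mathcal{D}(\mathscr{H})$ if and only if it obeys the constraint of Eq.~(\ref{eq:AB}), equivalently Eq.~(\ref{eq:RU-C}), which forces $U_k A U_k^\dagger=U_l A U_l^\dagger$ for every pair of indices $k,l$. The first step is to recast this as a commutation relation: multiplying on the left by $U_l^\dagger$ and on the right by $U_l$ gives $(U_l^\dagger U_k)A(U_l^\dagger U_k)^\dagger=A$, so, since the factor $U_l^\dagger U_k$ is unitary, $A$ commutes with $U_l^\dagger U_k$ for all $k,l$.

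The second step is to exploit the group hypothesis. Since $\{U_k\}$ realizes a representation of a group $G$, writing $U_k=U(g_k)$ we have $U_l^\dagger U_k=U(g_l^{-1}g_k)$; as $k,l$ range over the group, the products $g_l^{-1}g_k$ sweep out all of $G$. In particular, taking $l$ to be the identity element, whose representative is $\mathbb{I}_d$, already yields $U_k A U_k^\dagger = A$ for every $k$, i.e. $[U_k,A]=0$. Consequently $A$ commutes with every operator $U(g)$, $g\in G$, in the representation. Group closure guarantees this conclusion even if one starts only from a generating subset, since commuting with two operators implies commuting with their product.

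The final step is Schur's lemma. The representation is by hypothesis irreducible and acts on the complex space $\mathscr{H}_d$, so any operator commuting with all $U(g)$ must be a scalar multiple of the identity; hence $A=c\,\mathbb{I}_d$. Imposing the Hermiticity constraint $A=A^\dagger$ forces $c\in\mathbb{R}$, so $A\propto\mathbb{I}_d$ as claimed, and the correctable set collapses to the trivial one.

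The main obstacle I anticipate lies not in any hard analysis but in making the group-theoretic bookkeeping watertight: one must confirm that the unitaries appearing as products $U_l^\dagger U_k$ (or the group they generate) exhaust the full image of the irreducible representation, so that Schur's lemma is applied to the entire representation rather than to a proper sub-collection. Once $\{U_k\}$ is identified with, or shown to generate, the complete image of the irrep, the conclusion is immediate; the only remaining point to verify is that Schur's lemma is invoked in its complex-representation form, which is the relevant setting here since $\mathscr{H}_d$ is a complex Hilbert space.
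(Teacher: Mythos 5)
Your proof is correct and follows essentially the same route as the paper's: use the group structure to turn the condition $U_k A U_k^\dagger = U_l A U_l^\dagger$ into the commutation relations $[U_k,A]=0$ for all $k$, then invoke Schur's lemma for the irreducible representation to force $A\propto\mathbb{I}_d$. The caveat you flag at the end --- that the products $U_l^\dagger U_k$ must exhaust (the image of) the full group --- is exactly the implicit assumption the paper also makes when it asserts that $U_i^\dagger U_g$ is ``another element of the set of errors,'' so your attempt matches the paper's proof while being slightly more explicit about that point.
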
 
		\begin{proof}
			By assumption, the error operators $U_i$ are representations of a group. Due to the group structure of the error operators $U_i^\dagger U_g=U_k$ is another element of the set of errors and condition in Eq.~(\ref{eq:RU-C}) becomes
			\begin{equation}
				[U_k, A]=0, \;\;\forall k.
			\end{equation}
			Because it is assumed that $U_i$s are irreps of the group, due to Schur lemma, $A$ is proportional to identity.  
		\end{proof}
		This lemma demonstrates that, for a general qubit Pauli channel as defined in Eq.~(\ref{eq:RUchannel}), the identity is the only observable with a correctable expectation value—a conclusion reached without requiring detailed calculation.
		It is just enough to notice that the set $\{\pm I,\pm i\sigma_1,\pm i\sigma_2,\pm i\sigma_3\}$ is a two-dimensional irreducible representation of the Pauli group or Quaternion group $Q_8$.  Regarding Theorem~\ref{theorem:Equi} and considering that all unital qubit channels are equivalent to a Pauli channel \cite{ruskai2002}, we conclude that for all qubit unital channels, by the approach proposed in Theorem~\ref{theorem:RU} we can not correct the expectation value of any non-trivial observable.
		
		Lemma~\ref{lemma:IrepU} gives a subset of channels for which the strategy in Theorem~\ref{theorem:RU} is not useful. On the contrary, the following theorem applies to the noise model where the strategy of Theorem~\ref{theorem:RU} enables us to recover the expectation value of a family of observables. The number of parameters of this family of observables is addressed in the following theorem.
		\begin{theorem}\label{theorem:UV}
			Consider a noise model
			\begin{equation}
				\label{eq:RU2}
				\Phi(\bullet)=(1-p)U_1\bullet U_1^\dagger+p U_2\bullet U_2^\dagger, \;\; 0\leq p\leq 1,
			\end{equation}
			with non-trivial unitary noise operators $U_1, U_2\in \mathcal{L}(\mathscr{H}_d)$. In the absence of knowledge about the probability $p$, perfect noise deconvolution 
			($\Delta A_{\rm ND}(\rho)=0$), for all $\rho\in\mathcal{D}(\mathscr{H}_d)$) is achievable for any observable $A$ belonging to the $d$-dimensional subspace of observables: 
			 $\operatorname{Span}\{\ket{\omega_k}\bra{\omega_k}\}$, where $\ket{\omega_k}$ are eigenstates of $U_1^\dagger U_2$.
		\end{theorem}		
		\begin{proof}
			Without loss of generality we take 
			\begin{equation}
				\Phi_g(\bullet)=U_2\bullet U_2^\dagger.
			\end{equation}
			According to Theorem~\ref{theorem:RU}, Eq.~(\ref{eq:RU-C}) the observable with completely correctable expectation value satisfies:
			\begin{equation}
				U_1 A U_1^\dagger=U_2 A U_2^\dagger.
			\end{equation}
			To construct $A$, we need to obtain the kernel of $	\mathcal{F}_{\Phi,\Phi_g}$ defined in  Eq.~(\ref{eq:F}):
			\begin{equation}
				\label{eq:F_UV}
				\mathcal{F}_{\Phi,\Phi_g}= (1-p)(\operatorname{id}-W\otimes W^*),
			\end{equation}
			where 
			\begin{equation}
				\label{eq:W}
				W=U_1^\dagger U_2.
			\end{equation}
			Therefore, for vectors $\dket{A}$ in the kernel of $	\mathcal{F}_{\Phi,\Phi_g}$ we have
			\begin{equation}
				W\otimes W^*\dket{A}=\dket{A}.
			\end{equation}
			The operator $W$ is a unitary operator. Hence its eigenvalues are pure phase: 
			\begin{align}
				&W\ket{\omega_k}=e^{i\omega_k}\ket{\omega_k},\cr
				&W^*\ket{\omega_k^*}=e^{-i\omega_k}\ket{\omega_k^*},
			\end{align}
			for $k\in[0,d-1]=\{0,1,\cdots,d-1\}$, $\omega_k\in\mathbb{R}$, and $\ket{\omega_k^*}$ denoting the complex conjugate of $\ket{\omega_k}$.  Therefore,
			\begin{equation}
				(W\otimes W^*)\dket{\omega_k,\omega^*_k}=\dket{\omega_k,\omega^*_k},\;\; \forall k\in[0,d-1],
			\end{equation}
			and the kernel of $\mathcal{F}_{\Phi,\Phi_g}$ is $d$-dimensional:
			\begin{equation}
				\operatorname{Ker}(\mathcal{F}_{\Phi,\Phi_g})=\operatorname{Span}\{\dket{\omega_k,\omega^*_k}\}, \;\;\; k\in[0,d-1].
			\end{equation}
			It proves that the expectation value of $d$-parameter family of observables 
			\begin{equation}
				\label{eq:Observable2Uni}
				A=\sum_{k=0}^{d-1}a_k \ket{\omega_k}\bra{\omega_k}, \;\; a_k\in\mathbb{R},
			\end{equation}
			can be completely recovered by measuring $\widehat{\Phi}^{-1}(A)$ on the noise affected state. Here we have assumed that the spectrum of $W$ is non-degenerate. The degeneracy in the spectrum of $W$ gives a larger family of observables for which $\Delta A_{\rm ND}(\rho)=0$. Hence, the minimum number of parameters of desired observables is $d$. 
		\end{proof}
		
		\begin{remark}
			Channel
			\begin{equation}
				\label{eq:IW}
				\mathcal{E}(\rho)=(1-p)\rho+W \rho W^\dagger,
			\end{equation}
			is equivalent to channel $\Phi$ in Eq.~(\ref{eq:RU2}):
			\begin{equation}
				\mathcal{E}(\bullet)=U_1^\dagger\Phi(\bullet) U_1.
			\end{equation}
			Therefore by applying Theorem~\ref{theorem:Equi} and Theorem~\ref{theorem:UV} we understand that when the noise is described by $\mathcal{E}$ in Eq.~(\ref{eq:IW}), the set of observables with correctable expectation values has $d$-parameter. On the other hand, according to Eq.~(\ref{eq:RU-C}):
			\begin{equation}
				WAW^\dagger=A.
			\end{equation}
			Therefore, $[W^\dagger,A]=0$, from which we conclude that $\widehat{\mathcal{E}_g}^{-1}(A)=\widehat{\mathcal{E}}(A)=A$. This means that the set of observables with correctable expectation value, are invariant under the noise. Furthermore, by construction we must measure $\widehat{\mathcal{E}_g}^{-1}(A)$ which is equal to $A$. The same argument is not valid for the equivalent channel in Eq.~(\ref{eq:RU2}): $\widehat{\Phi}(A)\neq A$ and $\widehat{\Phi}_g^{-1}(A)\neq A$. 
		\end{remark}
		The following is an explicit example of Theorem~\ref{theorem:UV}.
		\begin{example}
			Suppose that the unitary operators $U_1$ and $U_2$ in Eq.~(\ref{eq:RU2}) are
			\begin{equation}
				U_1=\frac{1}{\sqrt{2}}\begin{pmatrix}
					1 & -1\cr
					1 & 1
				\end{pmatrix},\;\;
				U_2=\begin{pmatrix}
					0 & 1\cr
					1 & 0
				\end{pmatrix}.
			\end{equation}
			For guess $\Phi_g(\bullet)=U_2\bullet U_2^\dagger$. Hence,
			\begin{equation}
				W=U_1^\dagger U_2=\frac{1}{\sqrt{2}}\begin{pmatrix}
					1 & 1\cr
					1 & -1
				\end{pmatrix},
			\end{equation}
			with eigenvalues $1$ and $-1$ and respective eigenvectors:
			\begin{equation}
				\ket{\omega_1}=\begin{pmatrix}
					1+\sqrt{2}\cr
					1
				\end{pmatrix},\;\; \ket{\omega_2}=\begin{pmatrix}
					1-\sqrt{2}\cr
					1
				\end{pmatrix}.
			\end{equation}
			According to Eq.~(\ref{eq:Observable2Uni}) the expectation value of the following observables can be completely recovered:
			\begin{equation}
				\label{eq:exm-2Uqubit}
				A=\begin{pmatrix}
					a+2b & b\cr
					b & a
				\end{pmatrix},\;\; a,b\in\mathbb{R}.
			\end{equation}
			An arbitrary qubit observable has 4 parameters. By this method the expectation value of a subset of 2-parameter observables in Eq.~(\ref{eq:exm-2Uqubit}) can be recovered by measuring
			\begin{equation}
				U_2A U_2^\dagger=\begin{pmatrix}
					a & b\cr
					b & a+2b
				\end{pmatrix}, 
			\end{equation}
			and classical post-processing. 
		\end{example}
		As given in the statement of Theorem~\ref{theorem:UV}, for a system with $d$-dimensional Hilbert space, the number of parameters of the observables with completely recoverable expectation value, is at least $d$. If the  spectrum of $W$ in Eq.~(\ref{eq:W}) is degenerate then the number of parameters can be more than $d$. The following example shows this matter more clearly:
		\begin{example}
			Consider the noise model as in Eq.~(\ref{eq:RU2}) with
			\begin{equation}
				U_1=\frac{1}{\sqrt{2}}\begin{pmatrix}
					1 & 1 & 0\\
					1 & -1& 0\\
					0 & 0 & \sqrt{2}
				\end{pmatrix},\;
				U_2=\frac{1}{3\sqrt{2}}\begin{pmatrix}
					4 & 1 & 1\\
					0 & 3& -3\\
					-\sqrt{2} & 2\sqrt{2} & 2\sqrt{2}
				\end{pmatrix}.
			\end{equation}
			Then by taking $U_g=U_2$, the operator $W$ defined in Eq.~(\ref{eq:W}) is given by
			\begin{equation}
				W=\frac{1}{3}\begin{pmatrix}
					2 &2 & -1\\
					2 &-1&2\\
					-1&2 & 2
				\end{pmatrix}.
			\end{equation}
			The spectrum of $W$ is degenerate with eigenvalues $\omega_1=1,\omega_2=-1$ and $\omega_3=1$. The corresponding eigenvector of $W$ are respectively given by
			\begin{equation}
				\ket{\omega_1}=\frac{1}{\sqrt{3}}
				\begin{pmatrix}
					1\\1\\1
				\end{pmatrix},\;
				\ket{\omega_2}=\frac{1}{\sqrt{6}}
				\begin{pmatrix}
					1\\-1\\1
				\end{pmatrix},\;
				\ket{\omega_3}=\frac{1}{\sqrt{2}}
				\begin{pmatrix}
					1\\0\\-1
				\end{pmatrix}.
			\end{equation}
			Therefore, 	
			\begin{align}
				\operatorname{Ker}(\mathcal{F}_{\Phi,\Phi_g})=\operatorname{Span}\{
				&\ket{\omega_1,\omega_1^*},\ket{\omega_2,\omega_2^*},	\ket{\omega_3,\omega_3^*}, \cr
				&\ket{\omega_1,\omega_3^*},\ket{\omega_3,\omega_1^*}\}.
			\end{align}
			After imposing the Hermiticity constraint for the observables, we have a five parameter family of
			observables with correctable expectation value
			\begin{align}
				A&=a\ket{\omega_1}\bra{\omega_1}+b\ket{\omega_2}\bra{\omega_2}+c\ket{\omega_3}\bra{\omega_3}\cr
				&+d\ket{\omega_1}\bra{\omega_3}+d^*\ket{\omega_3}\bra{\omega_1},
			\end{align}
			with $a,b,c\in\mathbb{R}$ and $d\in\mathbb{C}$. Hence, this family is defined by five real parameters. 
		\end{example}
		\section{Partial denoising}
		\label{sec:ParRec}

		In this section, we discuss the possibility of reducing the noise effect on the expectation value of observables when full recovery is not possible. We address the requirements for reducing the noise effect and complete the section with an example. 
		
		As discussed in \S\ref{sec:GenFormalism},
		for a given noise model $\Phi$ with guess channel $\Phi_g$, the kernel of
		$\mathcal{F}_{\Phi,\Phi_g}$  is the key element for constructing  the set of observables with correctable
		expectation values. The discussed technique is input-independent. That is for all $\rho\in\mathcal{D}(\mathscr{H})$, 
		$\Delta A_{\rm ND}(\rho)=0$. 
		For observables not belonging to this family, noise effects on the expectation value can still be reduced if the inequality in Eq.~(\ref{eq:improve}) is satisfied. However, the exact reduction of the noise is input-dependent.
		
		While allowing full knowledge of the input state is not unrealistic, we can restrict our analysis to input states belonging to a particular subset of density operators. For instance, states generated by a process with known properties or inputs prepared with specific characteristics. With such partial prior knowledge about the input, it becomes possible to go beyond the set of observables with completely correctable expectation value. To this end, for a given noise model and guess of the noise, one needs to determine a subset of density operators and a set of observables for which inequality in Eq.~(\ref{eq:improve}) is satisfied. The following example clarifies the matter. 
		
		\begin{example}
			For the set of 12-parameter observables in Eq.~(\ref{eq:BFACor}), $\Delta A_{\rm ND}(\rho)=0,\;\forall\rho\in\mathcal{D}(\mathscr{H})$. That means the effect of noise can be completely removed by the noise deconvolution method. As discussed in the general setting for observables satisfying inequality in Eq.~(\ref{eq:improve}) although the complete correction is not possible, but the value of expectation value after implementing noise deconvolution technique becomes closer to its original value. For example consider the following observable which does not belong to the set of observables in Eq.~(\ref{eq:BFACor}):
			\begin{equation} \label{eq:sampleofotherA}
				B = \sigma_0 \otimes \sum_{i=2}^{3}\sigma_i + \sum_{i=2}^{3}\sigma_i \otimes  \sigma_0 + \sum_{i,j=2}^{3}\sigma_i \otimes  \sigma_j.
			\end{equation} 
			For this observable and a family of two-qubit states of the form
			\begin{equation}
				\label{eq:rhox}
				\rho =\frac{1}{4}\left(\sigma_0 \otimes \sigma_0 + x(\sigma_2 \otimes \sigma_0 + \sigma_0 \otimes \sigma_3) + \sigma_2 \otimes \sigma_3\right),
			\end{equation} 
			where $x$ is an unknown parameter in the interval $0<x\leq 1$, we have
			\begin{equation}
				\label{eq:DBexp}
				\Delta B_{\rm{exp}} (\rho)= p[(1-p)(1-\mu)+x].
			\end{equation}
			This quantifies the deviation of the expectation value of $B$ in Eq.~(\ref{eq:sampleofotherA}) with respect to the noise affected state, from its ideal value. The parameter $x$  reflects the dependence of this quantity on initial state. By considering $\Phi_g$ as in Eq.~(\ref{eq:PhigBF}) and recalling Eq.~(\ref{eq:measure}) we have
			\begin{equation}
				\label{eq:DBND}
				\Delta B_{\rm{ND}} (\rho)= p (1-p)(1-\mu).
			\end{equation}
			By comparing Eq.~(\ref{eq:DBexp}) with Eq.~(\ref{eq:DBND}) we have
			\begin{equation}
				\Delta B_{\rm{ND}}(\rho) \leq \Delta B_{\rm exp}(\rho),
			\end{equation}
			for all $\rho$ in Eq.~(\ref{eq:rhox}). 
			This inequality holds for all value of the unknown parameters $x$ and $\mu$. Hence, for all states in Eq.~(\ref{eq:rhox}) and for all values of $\mu$, there is relative success in recovering the expectation value of observable $B$ in Eq.~(\ref{eq:sampleofotherA}).
	
			 By using Eq.~(\ref{eq:DBexp}) and Eq.~(\ref{eq:DBND}) we see that 
			\begin{equation}
				\Delta B_{\rm{ND}}(\rho)- \Delta B_{\rm exp}(\rho)=-px
			\end{equation}
			Hence, in this example for input states with larger values of $x$, the noise deconvolution technique is more effective. 
		\end{example}
		
		\section{Conclusion}
		\label{sec:Dis}
		In this work, we introduce a new quantum noise deconvolution technique to recover the exact expectation value of a set of observables. Unlike full channel correction, which demands significant quantum resources, our method employs classical post-processing of measurement data. This makes it a practical tool for error mitigation on NISQ-era devices.
		Existing noise deconvolution techniques rely on complete knowledge of noise, which requires resources like process tomography. In \cite{roncallo2023}, it is shown that when complete knowledge of noise is not available, by preparation of a particular initial state, without full process tomography, it is possible to recover the ideal expectation value of the observable. But this method requires state preparation and becomes less efficient when the noise is non-unital. 
		
		Our method does not need complete knowledge of noise, and does not require even partial quantum process tomography. It is applicable to all noise models in arbitrary dimensions and also to multi-partite systems. 
		
		It is shown in \cite{mangini2022} that when noise is described by $\Phi$ to recover the expectation value of $A$, one must measure $\widehat{\Phi}^{-1}(A)$. 
		Here, as the exact channel $\Phi$ is unknown, we work with a hypothesized model, referred to as ``guessed channel"  $\Phi_g$, which is constructed from our partial knowledge of noise.
		Instead of $\widehat{\Phi}^{-1}(A)$, which we cannot have due to our incomplete knowledge of noise, we measure $\widehat{\Phi}^{-1}_g(A)$ to obtain the noise deconvolved expectation value. To benchmark the efficiency of our method, we define a measure $\Delta A_{\rm ND}(\rho)$ that represents the absolute deviation between the deconvolved expectation value of the observable and its ideal value. If $\Delta A_{\rm ND}(\rho)$ is zero or less than the absolute deviation between the noisy expectation value of the observable and its ideal value namely $\Delta A_{\rm exp}(\rho)$, we consider the noise deconvolution successful.  By demanding $\Delta A_{\rm ND}(\rho)=0$ we obtain a set of observables with complete recoverable expectation value if $\widehat{\Phi}^{-1}_g(A)$ is measured on the noisy states. Furthermore, we addressed the partial recovery of the expectation value of observables when full recovery is not possible, at least with the guessed channel.  We discuss that the performance of partial recovery of expectation value is input-dependent and it relies on partial knowledge about the system. 
		
		While our method advances quantum noise deconvolution by eliminating the need for complete knowledge of the noise, it also raises new questions for further improvements in quantum noise deconvolution. For instance, given a noise model, it is interesting to investigate what the optimal guessed channel is that gives the largest set of observables with completely recoverable expectation value. Furthermore, constructing the largest set of observables with partially recoverable expectation value and quantifying how the number of parameters of this set scales with partial knowledge of the input state enriches our quantum noise deconvolution techniques.  
		
		\begin{acknowledgments}
			L.M. acknowledges
			financial support from the Iran National Science Foundation (INSF) under Project No. 4022322, and financial support by Sharif University of Technology, Office
			of Vice President for research, under Grant No. QST4040206, and support from the ICTP through the Associates Programme (2019-2024).  
		\end{acknowledgments}
		\appendix
		\section{Inverse of the adjont map}
		 As discussed in \S\ref{sec:GenFormalism}, when we are interested in the expectation value of an observable $A$, on need to measure the modified observable $\hat{\Phi}_g^{-1)}$ on the noise affected system $\Phi(\rho)$. In this appendix we review the steps for computing the inverse of the adjoint of a map $\Phi$ or $\hat{\Phi}^{-1}$. For an explicit example, we discuss how Eq.~(\ref{eq:PhiInvAdjExample}) has been derived. 
		\bibliography{Main}
	\end{document}